\theoremstyle{plain}
\newtheorem{thm}{Theorem}
\theoremstyle{definition}
\theoremstyle{remark}
\begin{document}

\begin{frontmatter}

\title{Manifold embedding for  curve registration}
\runtitle{Manifold embedding for  curve registration}

\author{\fnms{Chlo\'e} \snm{Dimeglio}\ead[label=e3]{cd@geosys.com}}
\address{Institut de Math\'ematiques de Toulouse\\
\printead{e3}}
\affiliation{Institut de Math\'ematiques de Toulouse \& Geosys}
\and
\author{\fnms{Jean-Michel} \snm{Loubes}\ead[label=e1]{Jean-Michel.Loubes@math.univ-toulouse.fr}}
\address{Institut de Math\'ematiques de Toulouse\\
\printead{e1}}
\affiliation{Institut de Math\'ematiques de Toulouse}
\and
\author{\fnms{Elie} \snm{Maza}\corref{k}\ead[label=e2]{Elie.Maza@ensat.fr}}
\address{Ecole Nationale Sup\'erieure Agronomique de Toulouse\\
Genomic \& Biotechnology of the Fruit Laboratory\\
UMR 990 INRA/INP-ENSAT\\
\printead{e2}}
\affiliation{Institut National Polytechnique de Toulouse}

\runauthor{C. Dimeglio, J-M. Loubes \& E. Maza}

\begin{abstract}
We focus on the problem of finding a good representative of a sample of random curves warped from a common pattern $f$. We first prove that such a problem can be moved onto a manifold framework. Then, we propose an estimation of the common pattern $f$ based on an approximated geodesic distance on a suitable manifold. We then compare the proposed method to more classical methods.
\end{abstract}

\begin{keyword}[class=AMS]
\kwd[Primary ]{62G05}
\kwd[; secondary ]{62M99}
\end{keyword}

\begin{keyword}
\kwd{Manifold learning}
\kwd{Intrinsic statistics}
\kwd{Structural statistics}
\kwd{Graph-based methods}
\kwd{Curve alignment}
\kwd{Curve registration}
\kwd{Warping Model}
\kwd{Functional data}
\end{keyword}

\end{frontmatter}

\section{Introduction}\label{s:intro}
The  outcome of a statistical process  is often  a sample of curves $\{f_i,\:i=1,\ldots,m\}$ showing an unknown common structural pattern, $f$, which characterizes the behaviour of the observations. Examples are numerous, among others, growth curves analysis in biology and medicine, quantitative analysis of microarrays in molecular biology and genetics, speech signals recognition in engineering, study of expenditure and income curves in economics\dots. Hence, among the last decades,  there has been a growing  interest to develop statistical methodologies which enables to recover from the observation functions  a  single "mean curve" that  conveys all the information of the data.

A major difficulty comes from the fact that there are both amplitude variation (in the $y$-axis) or phase variation (in the $x$-axis) which prevent any direction extraction of the mean, median, correlations or any other statistical indices for a standard multivariate procedure such as principal component analysis, and canonical correlations analysis, see \cite*{Kneip1992}  or \cite*{Ramsay2005} and references therein. Indeed, the classical cross-sectional mean does not provide a consistent estimate of the function of interest $f$ since it fails to capture the  structural characteristics in the sample of curves as quoted in \cite*{Ramsay1998}. Hence, curve registration methods (also called curve alignment, structural averaging, or time warping) have been proposed in the statistical literature. We refer to, just to name a few, \cite*{Sakoe1978} in Spoken Word Recognition domain, \cite{Kneip1992} for Landmark Registration, \cite*{Silverman1995} for a functional principal component analysis, \cite*{Wang1997} for Dynamic Time Warping, \cite{Ramsay1998} for Continuous Monotone Registration, \cite*{Ronn2001} for shifted curves, \cite*{Liu2004} for functional convex averaging, \cite*{Gervini2005} for maximum likelihood estimation, \cite*{Gamboa2007} for shifts estimation, \cite*{James2007} for alignment by moments, and \cite*{Dupuy2011} for Structural Expectation estimation.

This issue is closely related to the problem of finding the mean of observations lying in a space with an unknown, non necessarily euclidean, underlying geometry. The problem is thus twofold.

First, the mere definition of the mean should be carefully studied. Indeed, let $\mathcal{E}=\left\{X_1,\dots,X_n\right\}$ be a sample of i.i.d random variables of law $X\in\mathcal{M}$ where $\mathcal{M}$ is a submanifold of $\mathbb{R}^p$.   If we denote by $\mathrm{d}$ the Euclidean distance on $\mathbb{R}^p$, then  the classical sample mean, or Fr\'echet sample mean, defined by
\begin{equation}\label{classicalMean}
\widehat{\mu}=\arg\min_{\mu\in\mathbb{R}^p}\sum_{i=1}^n\mathrm{d}^2\left(X_i,\mu\right)
\end{equation}
is not always a good representative of the given sample $\mathcal{E}$, and, obviously, of the underlying population. Using the geometry of the manifold, it  seems natural to replace Criterion~\eqref{classicalMean} by
\begin{equation*}
\widehat{\mu}_I=\arg\min_{\mu\in\mathcal{M}}\sum_{i=1}^n\delta^2\left(X_i,\mu\right)
\end{equation*}
where $\delta$ is the geodesic distance on manifold $\mathcal{M}$, giving rise to the \textit{intrinsic mean}, whose existence and properties are studied, for instance, in \cite*{Bhattacharya2003}.  When dealing with functional data, we assume that the functions $f_i$ can be modeled  as variables with values on a manifold, and curve registration amounts to considering an intrinsic statistic that reflects the behaviour of the data. In the following we will consider, for $\alpha>0$,
\begin{equation}\label{intrinsicStatistic0}
\widehat{\mu}_I^\alpha=\arg\min_{\mu\in\mathcal{M}}\sum_{i=1}^n\delta^\alpha\left(X_i,\mu\right).
\end{equation}
In particular, for $\alpha=1$, we will deal with $\widehat{\mu}_I^1$,  the intrinsic sample median.

Second, previous construction relies on the choice of the embedding which may not be unique, then the manifold itself and its underlying geodesic distance. Actually we only have at hand a sample of random variables which are sought to be a discretization of an unobserved manifold. Over the last decade, some new technics have been developed to find and compute the natural embedding of data onto a manifold and to estimate the corresponding geodesic distance, see for instance \cite*{Silva2003} for a review of global (Isomap type) and local (LLE type) procedures, while applications have been widely developed, see for instance \cite*{Pennec2006}.

In the following, we will consider an   approximation, achieved with a graph theory approach inspired by works on manifold learning and dimension reduction \citep*{Tenenbaum2000}. We will first show that curve registration for parametric transformations can be solved using a manifold geodesic approximation procedure. Then, we will highlight that this enables to recover a mean pattern which conveys the information of a group of  curves. This pattern is used for curve classification for simulated data and real data which consists in predicting a particular landscape using the reflectance of the vegetation.

This article falls into the following parts. Section~\ref{s:algo} is devoted to the construction of the approximated geodesic distance. In Section~\ref{s:curve}, we describe the manifold framework point of view for curve registration. We then explain how to estimate a representative of a sample of warped curves. The performance of this estimator is then studied in Section~\ref{s:simul} using simulated data, and in Section~\ref{s:real} with a real data set. Concluding remarks are given in Section~\ref{s:conclu}. Proofs are gathered in Section~\ref{s:append}.

\section{A graph construction for topology estimation}\label{s:algo}
Let $X$ be a random variable with values in an unknown connected and geodesically complete Riemannian manifold $\mathcal{M}\subset\mathbb{R}^p$. We observe an i.i.d sample $\mathcal{E}=\{X_i\in\mathcal{M},\:i=1,\dots,n\}$ with distribution $X$. Set $\mathrm{d}$ the Euclidean distance on $\mathbb{R}^p$ and $\delta$ the  induced geodesic distance on $\mathcal{M}$. Our aim is to estimate intrinsic statistics defined by Equation~\eqref{intrinsicStatistic0}. Since the manifold $\mathcal{M}$ is unknown, the main issue is to estimate the geodesic distance between two points on  the manifold, that is $\delta\left(X_i,X_j\right)$.

Let $\gamma_{ij}$ be the geodesic path connecting two points $X_i$ and $X_j$, that is the minimum length path on $\mathcal{M}$ between points $X_i$ and $X_j$. Denoting by $\mathrm{L}\left(\gamma\right)$ the length of a given path $\gamma$ on $\mathcal{M}$, we have that $\delta\left(X_i,X_j\right)=\mathrm{L}\left(\gamma_{ij}\right)$.

In the Isomap algorithm, \cite{Tenenbaum2000} propose to learn manifold topology from a graph connecting $k$-nearest neighbors for a given integer $k$. In the same way, our purpose is to approximate the geodesic distance $\delta$ with a suitable graph connecting nearest neighbors. Our approximation is carried out in three steps. Thereafter, we denote $g_{ij}$ a path connecting two points $X_i$ and $X_j$ on a given graph, and $L\left(g_{ij}\right)$ the length of such a path.

\paragraph{Step 1} Consider $K=\left(\mathcal{E},E\right)$ the complete Euclidean graph associated to sample $\mathcal{E}$, that is the graph made with all the points of the sample $\mathcal{E}$ as vertices, and with edges
$$E=\left\{\left\{X_i,X_j\right\},\:i=1,\dots,n-1,\:j=i+1,\dots,n\right\}.$$
For an Euclidean graph, the edge weights are the edge lengths, that is, the Euclidean distances between each pair of points.

\paragraph{Step 2} Let $T=\left(\mathcal{E},E_T\right)$ be the Euclidean Minimum Spanning Tree (EMST) associated to $K$, that is, the spanning tree that minimizes
\begin{equation*}
\sum_{\{X_i,X_j\}\in E_T}\mathrm{d}\left(X_i,X_j\right).
\end{equation*}
The underlying idea in this construction is that,  if two points $X_i$ and $X_j$ are relatively close, then we have that $\delta\left(X_i,X_j\right)\approx\mathrm{d}\left(X_i,X_j\right)$. This may not be true if the manifold is very twisted and if too few points are observed, and may induce bad approximations, hence the algorithm will produce a good approximation for relatively regular manifolds. It also generally requires a large number of sampling points on the manifold in order to guarantee the quality of this approximation. This drawback is well known when dealing with graph based approximation of the geodesic distance. Then, the graph $T$ is a connected graph spanning $K$ which mimics  the manifold $\mathcal{M}$. Furthermore, an approximation of the geodesic distance $\delta\left(X_i,X_j\right)$ is provided by the sum of all the euclidean distance of the edges of the shortest path on $T$ connecting $X_i$ to $X_j$, namely $$\hat{\delta}\left(X_i,X_j\right)=\min_{g_{ij}\in T}L\left(g_{ij}\right).$$
However, this approximation is too sensitive to perturbations of the data, and hence, very unstable. To cope with this problem, we propose to add more edges between the data to add extra paths in the data sample and thus to increase stability of the estimator.  The idea is that paths which are close to the ones selected in the construction of the EMST could provide alternate ways of connecting the edges. Close should be here understood as lying in balls around the observed points. Hence, these new paths between the data  are admissible and should be added to the edges of the graph. This provides redundant information but also stabilizes the constructed distance, and may also provide an answer to the the main defect of the algorithm that considers that two points very close with respect to the Euclidean distance are also close with respect to the geodesic distance.

\paragraph{Step 3} Let $B\left(X_i,\epsilon_i\right)\subset\mathbb{R}^p$ the open ball of center $X_i$ and radius $\epsilon_i$ defined by
$$\epsilon_i=\max_{\left\{X_i,X_j\right\}\in E_T}\mathrm{d}\left(X_i,X_j\right).$$
Let graph $K^\prime=\left(\mathcal{E},E^\prime\right)$ defined by
$$\{X_i,X_j\}\in E^\prime\Longleftrightarrow\overline{X_iX_j}\subset\bigcup_{i=1}^nB\left(X_i,\epsilon_i\right)$$
where
$$\overline{X_iX_j}=\left\{X\in\mathbb{R}^p,\:\exists\lambda\in[0,1],\:X=\lambda X_j+(1-\lambda)X_i\right\}.$$
Then, $K^\prime$ is the graph which gives rise to our estimator of the distance $\delta$ :
\begin{equation}\label{estidelta}
\hat{\delta}\left(X_i,X_j\right)=\min_{g_{ij}\in K^\prime}L\left(g_{ij}\right).
\end{equation}
Hence, $\hat{\delta}$ is the distance associated with $K^\prime$, that is, for each pair of points $X_i$ and $X_j$, we have $\hat{\delta}\left(X_i,X_j\right)=\mathrm{L}\left(\hat{\gamma}_{ij}\right)$ where $\hat{\gamma}_{ij}$ is the minimum length path between $X_i$ and $X_j$ associated to $K^\prime$.

We note that, the 3-steps procedure described above contains widespread graph-based methods to achieve our purpose. In this article, our graph-based calculations, such as MST estimation or shortest path calculus, were carried out with the R Language \citep*{R2010} with the \textit{igraph} package for network analysis \citep*{Csardi2006}.

An example of this 3-steps procedure and its behaviour when the number of observations increases are displayed respectively in Figure~\ref{met1} and Figure~\ref{met2}. In Figure~\ref{met1}, points $\left(X_i^1,X_i^2\right)_i$ are simulated as follows :
\begin{equation}\label{sim1}
X_i^1=\frac{2i-n-1}{n-1}+\epsilon_i^1\mbox{ and }X_i^2=2\left(\frac{2i-n-1}{n-1}\right)^2+\epsilon_i^2
\end{equation}
where $\epsilon_i^1$ and $\epsilon_i^2$ are normaly distributed with mean 0 and variance 0.01. In Figure~\ref{met2}, we give some results of graph $K^\prime$ for $n\in\{10,30,100,300\}$. We can see in such a figure that graph $K^\prime$ tends to be close to the manifold $\left\{\left(t,t^2\right)\in\mathbb{R}^2,\:t\in\mathbb{R}\right\}$.
\begin{figure}[htb]
	\centering
	\includegraphics[width=0.7\textwidth,angle=270]{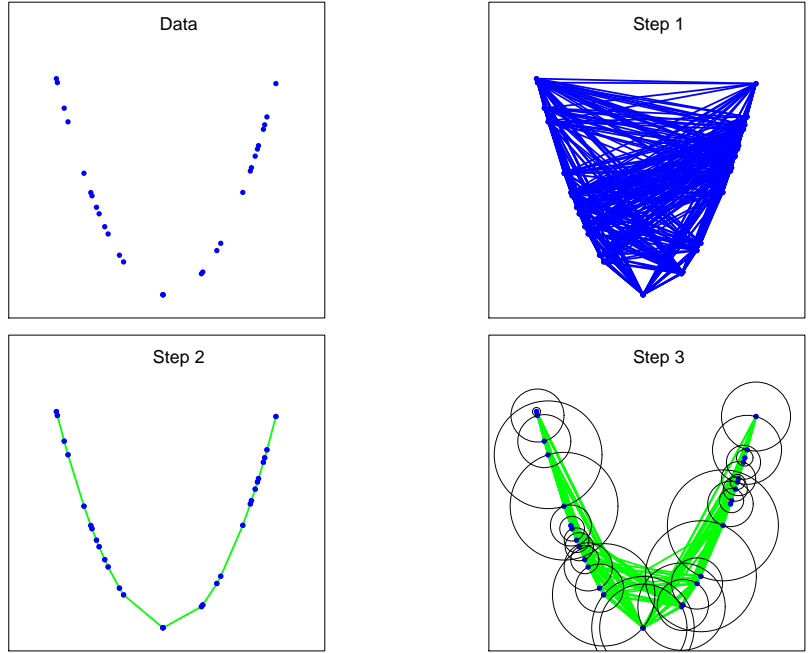}
	\caption{Construction of a subgraph $K^\prime$ from Simulation~\eqref{sim1} with $n=30$ points. On the top left, a simulated data set. On the top right, the associated complete Euclidean graph $K$ (Step 1). On the bottom left, the EMST associated with the complete graph $K$ (Step 2). On the bottom right, the associated open balls and the corresponding subgraph $K^\prime$ (Step 3).}
	\label{met1}
\end{figure}
\begin{figure}[htb]
	\centering
	\includegraphics[width=0.7\textwidth,angle=270]{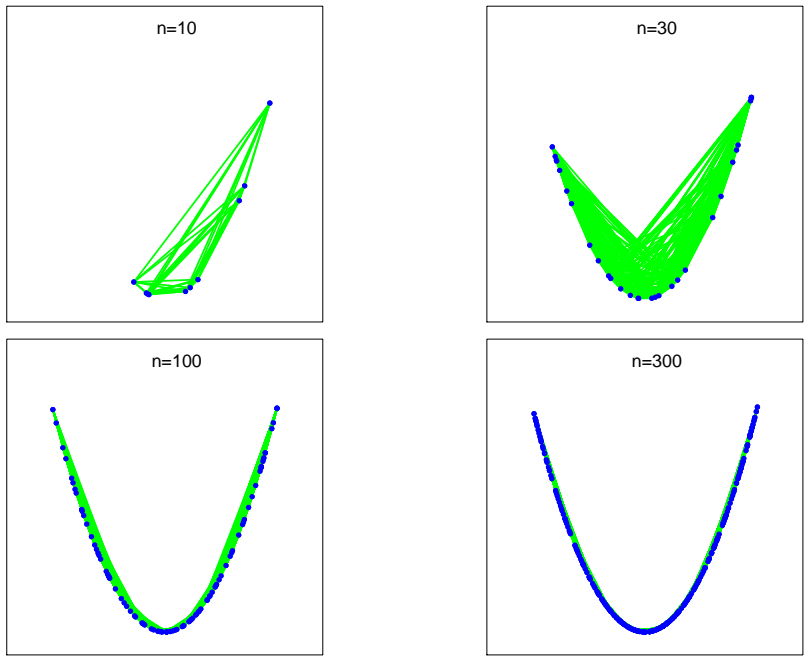}
	\caption{Evolution of graph $K^\prime$ for Simulation~\eqref{sim1} and $n\in\{10,30,100,300\}$.}
	\label{met2}
\end{figure}

The main difference between our algorithm and the Isomap algorithm lies in the treatment of points which are far from the others. Indeed, the first step of the original Isomap algorithm consists in constructing the $k$-nearest neighbor graph or the $\epsilon$-nearest neighbor graph for a given integer $k$ or a real $\epsilon>0$. Hence, points which are not connected to the biggest graph, since they are too distant, are not used for the construction of the estimated distance. Such a step is not present in our algorithm since in the applications we consider a distant point is not always an outlier. Hence, we do not exclude any points, and rather, for the construction of the EMST, all points of the data set are  connected.  Moreover, the Isomap algorithm requires the choice of parameters which are closely related to the local curvature of the manifold (see, for instance, \cite*{Balasubramanian2002}). This involves a heavy computing phase which is crucial for the quality of the construction, while, in our version we tend to give an automatic selection of parameters. We will show in Section~\ref{s:curve} that both procedures used for curve registration behave in a similar way and over performs other standard feature extraction methods.

In the following section, we present a new application of manifold learning to the curve alignment problem.

\section{Application to curve alignment}\label{s:curve}
Consider a function $f:\mathbb{R}\to\mathbb{R}$, which will be the  pattern to be recovered, observed in a translation effect framework. Let $A$ be a real valued random variable with unknown distribution on an interval $(b,c)\subset\mathbb{R}$. The observation model is defined by
\begin{equation}\label{model_shift}
X_i^j=f\left(t_j-A_i\right),\:i\in\{1,\dots,n\},\:j\in\{1,\dots,m\},
\end{equation}
where $\left(A_i\right)_i$ are i.i.d random variables drawn with distribution  $A$ which model the unknown translation  parameters, while  $\left(t_j\right)_j\in\mathbb{R}^m$ stand for the measurement points.

This situation usually happens  when  individuals  experience similar events, which are explained by a common pattern $f$, and when the starting times of the events are not synchronized. Such a model has been studied, for instance, in \cite{Silverman1995} and in \cite{Ronn2001}. This issue has also received a specific attention in a semi-parametric framework in \cite{Gamboa2007} or \cite*{Castillo2009}. In these works, among others, shift parameters are estimated, which enables to align the observations and thus to get rid of the translation issue. Model~\eqref{model_shift} falls also under the generic warping model proposed in \cite{Maza2006} and in \cite{Dupuy2011} which purpose is to estimate the underlying {\it structure} of the curves. For this, the authors define  the \textit{structural median} $f_\mathrm{SM}$ of the data. In the case of translation effects, it corresponds to
\begin{equation}\label{def_structuralMedian}
f_\mathrm{SM}=f\left(\cdot-\mathrm{med}(A)\right)
\end{equation}
with $\mathrm{med}(A)$ the median of $A$. Hence, a natural estimator of the structural median $f_\mathrm{SM}$, related to Model~\eqref{model_shift}, would be
\begin{equation}\label{est_structuralMedian}
\widehat{f}_\mathrm{SM}=\left(f\left(t_1-\widehat{\mathrm{med}}(A)\right),f\left(t_2-\widehat{\mathrm{med}}(A)\right),\dots,f\left(t_m-\widehat{\mathrm{med}}(A)\right)\right)
\end{equation}
with $\widehat{\mathrm{med}}(A)$ the median of sample $\left(A_i\right)_i$. However, we first note that the translation parameters $\left(A_i\right)_i$ are not observed, and, as a consequence, that the median $\widehat{\mathrm{med}}(A)$ can not directly be calculated. Then, the  function $f$ is also unknown, so, estimating $\widehat{\mathrm{med}}(A)$ is not enough to calculate $\widehat{f}_\mathrm{SM}$.  Our purpose is to show that our manifold point of view provides a direct estimate of  ${f}_\mathrm{SM}$ without the prior estimation of $\mathrm{med}(A)$.

In order to use the manifold embedding approach, define
$$\begin{array}{rcl}
X:\mathbb{R}&\to&\mathbb{R}^m\\
a&\mapsto&X(a)=\left(f\left(t_1-a\right),f\left(t_2-a\right),\dots,f\left(t_m-a\right)\right)
\end{array}$$ 
and set \begin{equation*}
\mathcal{C}=\left\{X(a)\in\mathbb{R}^m,\:a\in\mathbb{R}\right\}.
\end{equation*}
As soon as $f'\neq0$, the map $X:a\mapsto X(a)$ provides a natural parametrization of $\mathcal{C}$ which can thus be seen as  a submanifold of $\mathbb{R}^m$ of dimension 1. The corresponding geodesic distance is given by $$\delta\left(X(a_1),X(a_2)\right)=\left|\int_{a_1}^{a_2}\left\|X^\prime(a)\right\|\mathrm{d}a\right|.$$
The observation model~\eqref{model_shift} can be seen as a discretization of the manifold $\mathcal{C}$ for different values $\left(A_i\right)_i$. Finding the median of all the shifted curves can hence be done by understanding the \textit{geometry} of space $\mathcal{C}$, and thus approximating the geodesic distance between the curves.

The following theorem states that the structural median $\widehat{f}_\mathrm{SM}$ defined by Equation~\eqref{est_structuralMedian} is equivalent to the median with respect to the geodesic distance on $\mathcal{C}$, that is
$$\widehat{\mu}_I^1=\arg\min_{\mu\in\mathcal{C}}\sum_{i=1}^n\delta\left(X_i,\mu\right),$$
which provides a geometrical interpretation of the structural median.
\begin{thm}\label{thmSM}
Under the assumption that $f' \neq 0$, we get that
$$\widehat{\mu}_I^1=\widehat{f}_\mathrm{SM}.$$
\end{thm}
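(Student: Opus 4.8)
The plan is to reduce the geodesic median problem on $\mathcal{C}$ to an ordinary one-dimensional median of real numbers, exploiting that $\mathcal{C}$ is a one-dimensional manifold and can therefore be placed in arc-length coordinates. First I would observe that, since $X:a\mapsto X(a)$ parametrizes $\mathcal{C}$, every candidate $\mu\in\mathcal{C}$ is of the form $X(a)$ for some $a\in\mathbb{R}$, while each observation satisfies $X_i=X(A_i)$. Hence minimizing over $\mu\in\mathcal{C}$ is the same as minimizing over $a\in\mathbb{R}$, so that $\widehat{\mu}_I^1=X(\widehat{a})$ with $\widehat{a}=\arg\min_{a}\sum_{i=1}^n\delta\left(X(A_i),X(a)\right)$.

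Next I would introduce the arc-length function $g(a)=\int_{a_0}^{a}\left\|X^\prime(u)\right\|\mathrm{d}u$ for a fixed base point $a_0$. Since $\left\|X^\prime(a)\right\|^2=\sum_{j=1}^m f^\prime(t_j-a)^2$ and $f^\prime\neq0$, the integrand is strictly positive, so $g$ is a continuous, strictly increasing, hence injective, function of $a$. From the expression for $\delta$ stated before the theorem, this immediately gives $\delta\left(X(a_1),X(a_2)\right)=\left|g(a_2)-g(a_1)\right|$, turning the objective into a sum of absolute values.

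Then I would substitute $s=g(a)$ and $s_i=g(A_i)$ to rewrite the criterion as $\sum_{i=1}^n\left|s-s_i\right|$, whose minimizer over $s\in\mathbb{R}$ is the median of the sample $(s_i)_i$. This yields $g(\widehat{a})=\mathrm{med}\left(g(A_i)\right)$. Because $g$ is strictly increasing, the median commutes with $g$, so $\mathrm{med}\left(g(A_i)\right)=g\left(\mathrm{med}(A_i)\right)=g\left(\widehat{\mathrm{med}}(A)\right)$; injectivity of $g$ then forces $\widehat{a}=\widehat{\mathrm{med}}(A)$. Substituting back into the parametrization gives $\widehat{\mu}_I^1=X\left(\widehat{\mathrm{med}}(A)\right)=\widehat{f}_\mathrm{SM}$ by the very definition of $\widehat{f}_\mathrm{SM}$ in Equation~\eqref{est_structuralMedian}.

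The steps demanding the most care are the two monotonicity arguments. I expect the main obstacle to be making rigorous that $g$ is a genuine bijection onto its range and that the commutation $\mathrm{med}\left(g(A_i)\right)=g\left(\mathrm{med}(A_i)\right)$ holds, both of which rest on $f^\prime\neq0$ guaranteeing $g$ strictly monotone; the standard fact that the median minimizes the sum of absolute deviations slots in here as well. A secondary subtlety is the possible non-uniqueness of the median when $n$ is even, but this is harmless since $\widehat{f}_\mathrm{SM}$ is itself defined through the same $\widehat{\mathrm{med}}(A)$, so the same convention on both sides makes the identity hold verbatim.
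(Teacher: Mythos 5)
Your proof is correct, but it takes a different route from the paper's. The paper works directly with the distance $D\left(A_i,\alpha\right)=\left|\int_{A_i}^{\alpha}\left\|X^\prime(a)\right\|\mathrm{d}a\right|$ on the parameter interval, orders the $A_i$, expands the criterion $C(\alpha)$ as a weighted sum of the gaps $D\left(A_{(i)},A_{(i+1)}\right)$ between consecutive order statistics, and verifies by an explicit telescoping computation that $C\left(A_{(j)}\right)-C\left(A_{(q+1)}\right)=\sum_{i=j}^q(n-2i)D\left(A_{(i)},A_{(i+1)}\right)>0$ for $n=2q+1$; in effect it re-proves from scratch, in terms of $D$, the fact that the median minimizes the sum of absolute deviations. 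You instead pass to arc-length coordinates $s=g(a)$, observe that $\delta$ becomes the ordinary absolute difference $\left|s-s_i\right|$, and invoke the standard median characterization, then pull back through the strictly increasing $g$. Both arguments rest on the same structural fact --- the geodesic distance on this one-dimensional manifold is an additive metric induced by a strictly monotone function of the parameter --- but your version is more modular and handles all candidate $\alpha$ (including those between and outside the order statistics) in one stroke, whereas the paper's explicit decomposition only compares values at the order statistics and is self-contained. One small caution: your claim that ``the median commutes with $g$'' is literally false for even $n$ under the averaging convention $\widehat{\mathrm{med}}(A)=\frac{1}{2}\left(A_{(q)}+A_{(q+1)}\right)$, since $g$ need not preserve midpoints; what survives is only that $g\left(\widehat{\mathrm{med}}(A)\right)$ lies in the interval $\left[s_{(q)},s_{(q+1)}\right]$ of minimizers, so $\widehat{f}_\mathrm{SM}$ is \emph{a} minimizer rather than \emph{the} minimizer. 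The paper sidesteps this by treating only odd $n$, so you are no worse off, but the commutation step should be stated for odd $n$ (where the median is the middle order statistic and genuinely commutes with any strictly increasing map).
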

Previous theorem can be extended to the more complex case of parametric deformations of the type
\begin{eqnarray*}
X:\mathbb{R}^3&\to&\mathbb{R}^m\\
(a,b,c)&\mapsto&X(a,b,c)=\left(af\left(t_1-b\right)+c,\dots,af\left(t_m-b\right)+c\right)
\end{eqnarray*}
as soon as $a\neq0$ and $f'\neq0$. Such a model has been described in \cite*{Vimond2010} and in \cite*{Bigot2010b}. In this case, the submanifold is obviously of dimension 3.

In an even more general framework, when the observations can be modeled by  a set of curves warped one from another by an unobservable deformation process, this estimate enables to recover the main pattern. It relies on the assumption that all the data belong to a manifold whose geodesic distance can be well approximated by the graph structure of the modified minimal spanning tree described in Section~\ref{s:algo}.

Finally, we propose the following estimator of the structural median 
\begin{equation}\label{estimator}
\tilde{\mu}_I^1=\arg\min_{\mu\in\mathcal{E}}\sum_{i=1}^n\widehat{\delta}\left(X_i,\mu\right),
\end{equation}
using the geodesic distance $\widehat{\delta}$, estimated by the algorithm described in Section~\ref{s:algo}.

The numerical properties of this estimator is studied using simulations in Section~\ref{s:simul}, and for real data sets in Section~\ref{s:real}.

\section{Simulations}\label{s:simul}
We consider the target function $f:\mathbb{R}\to\mathbb{R}$ defined by $f(t)=t\sin(t)$. We simulate deformations  of this function on $j=1,\dots,m=100$ equally distributed points $t_j$ of the interval $[-10,10]$, according to the following model :
\begin{equation}\label{sim2}
Y_i\left(t_j\right)=A_if\left(B_it_j-C_i\right),\:i=,\dots,n,\:j=1,\dots,m,
\end{equation}
where $\left(A_i\right)_i$  and $\left(C_i\right)_i$ are i.i.d uniform random variables on $[-10,10]$ while $\left(B_i\right)_i$ is an i.i.d sample of a uniform distribution on $[-1,1]$. We finally obtain a data set of $n=100$ curves where each differs from the initial function $f$ by a translation and an amplitude deformation. The data is displayed on the left graph of Figure~\ref{curve_estim}.

We then consider four estimators of the function $f$. The first one, which minimizes the approximated geodesic distance, defined by Equation~\eqref{estimator}, will be referred to as the structural median estimator. The second one is obtained by the Curve Alignment by Moments procedure (CAM) developped by \cite*{James2007}. The third one is the template obtained with the Isomap strategy, with the "isomap" function of the R package \textit{vegan} \citep{Oksanen2011}. The last one is the mere mean of the data.

We recall here that the CAM procedure consists on extracting the mean pattern by synchronization of the moments of the simulated curves. For this, \cite{James2007} introduces the {\it feature function} concept for a given function $g$, defined as $I_g(t)$ :
\begin{equation*}
I_g(t)\geq0\mbox{ and }\int{I_g(t)\mathrm{d}t}=1
\end{equation*}
and the moments 
\begin{equation*}
\mu_g^{(1)}=\int{tI_g(t)\mathrm{d}t}\mbox{ and }\mu_g^{(k)}=\int{\left(t-\mu_g^{(1)}\right)^kI_g(t)\mathrm{d}t},\:k\geq2.
\end{equation*}
Then, the CAM procedure align the curves by warping their moments, for instance, the amplitude of the peaks, at the location they occur, the variance around these peaks, and so on. This method relies on the choice of a proper feature function, for instance $I^{(l)}_g(t)=|g^{(l)}(t)|/\int|g^{(l)}(s)|\mathrm{d}s $ for a given $l \geq 0$, on an approximation of the functions by splines, and the selection of the number of moments to be synchronized. Hence, it highly depends on the choice of  these tuning parameters. We have chosen the optimal value of the parameters over a grid.

These four estimators are shown on Figure~\ref{curve_estim}. With the CAM or the mere mean procedure, the average curve does not reflect the structure of the initial curves, or the amplitude of their variations. On the contrary, the structural median extracted by Manifold Warping has the characteristics of the closest target curve, but is also its best approximation together with the pattern obtained with the Isomap strategy. Note here that our version of the algorithm for curves provide the same kind of template and is parameter free while parameters governing the dimension of the manifold embedding must be chosen for the Isomap procedure. Nevertheless, both procedures are competitive and lead to similar performance.  
\begin{figure}[htb]
	\centering
	\includegraphics[width=0.7\textwidth,angle=270]{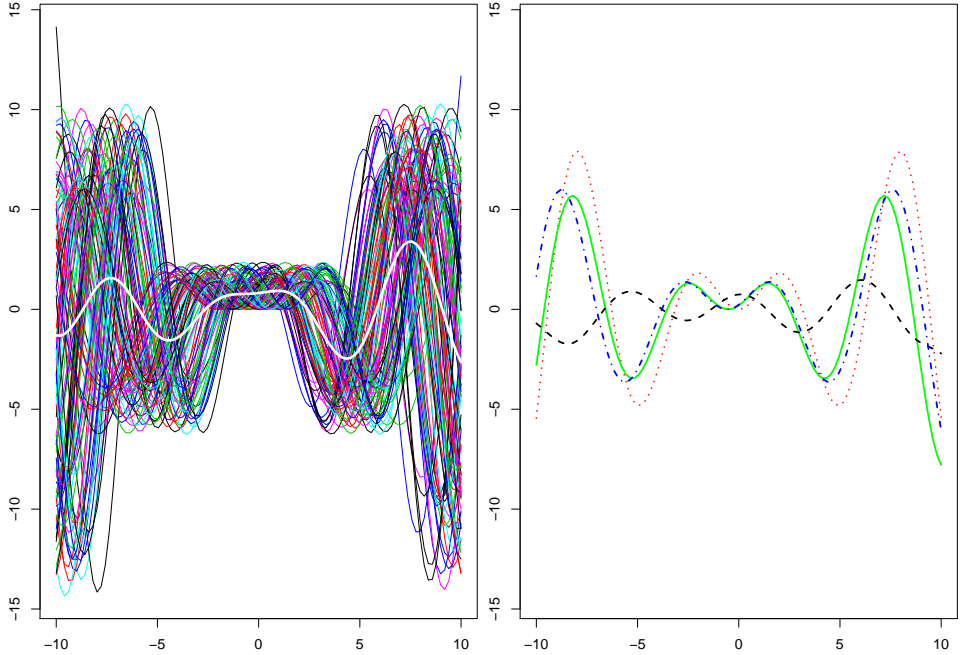}
	\caption{On the left, a simulated data set of warped curves from Model~\eqref{sim2} and an estimation of $f$ with the mere mean (white line). On the right, the target function $f$ (red dotted line), an estimation of the structural median by Manifold Warping (green solid line), an estimation obtained by Isomap (blue dot-dashed line), and an estimation obtained with the CAM procedure (black dashed line).}
	\label{curve_estim}
\end{figure}
\section{Real data}\label{s:real}
Consider the real data case where an observation curve represents the reflectance of a particular landscape and fully characterizes the nature of each landscape. The purpose of this study is to predict the different landscapes while observing the reflectance profiles. In  Figures~\ref{real_data_ble} and \ref{real_data_mais}, we present two data sets corresponding to reflectance patterns of two landscapes in the same region with the same period. However, the reflectance depends on the vegetation whose growth depends on the weather condition and the behavior in soil. It is therefore relevant to consider that these profiles are deformations in translation and/or amplitude of a single representative function of the reflectance behaviour of each landscape in this region at this time.
\begin{figure}[htb]
	\centering
	\includegraphics[width=0.7\textwidth,angle=270]{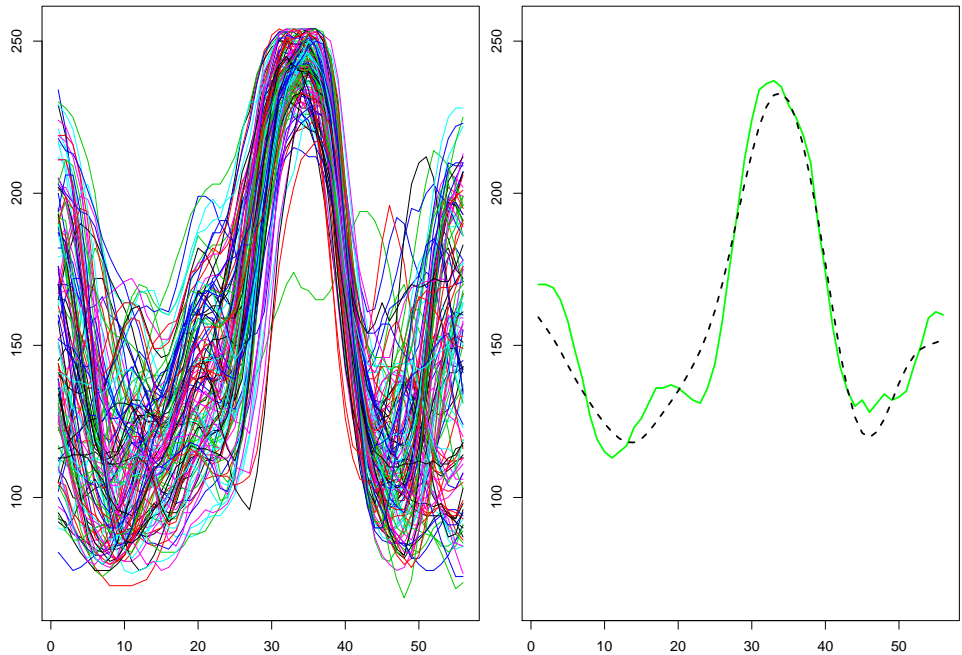}
	\caption{On the left, the first landscape data. On the right, the CAM representative estimation (black dashed line) and the Manifod Warping estimation (green solid line).}
	\label{real_data_ble}
\end{figure}
\begin{figure}[htb]
	\centering
	\includegraphics[width=0.7\textwidth,angle=270]{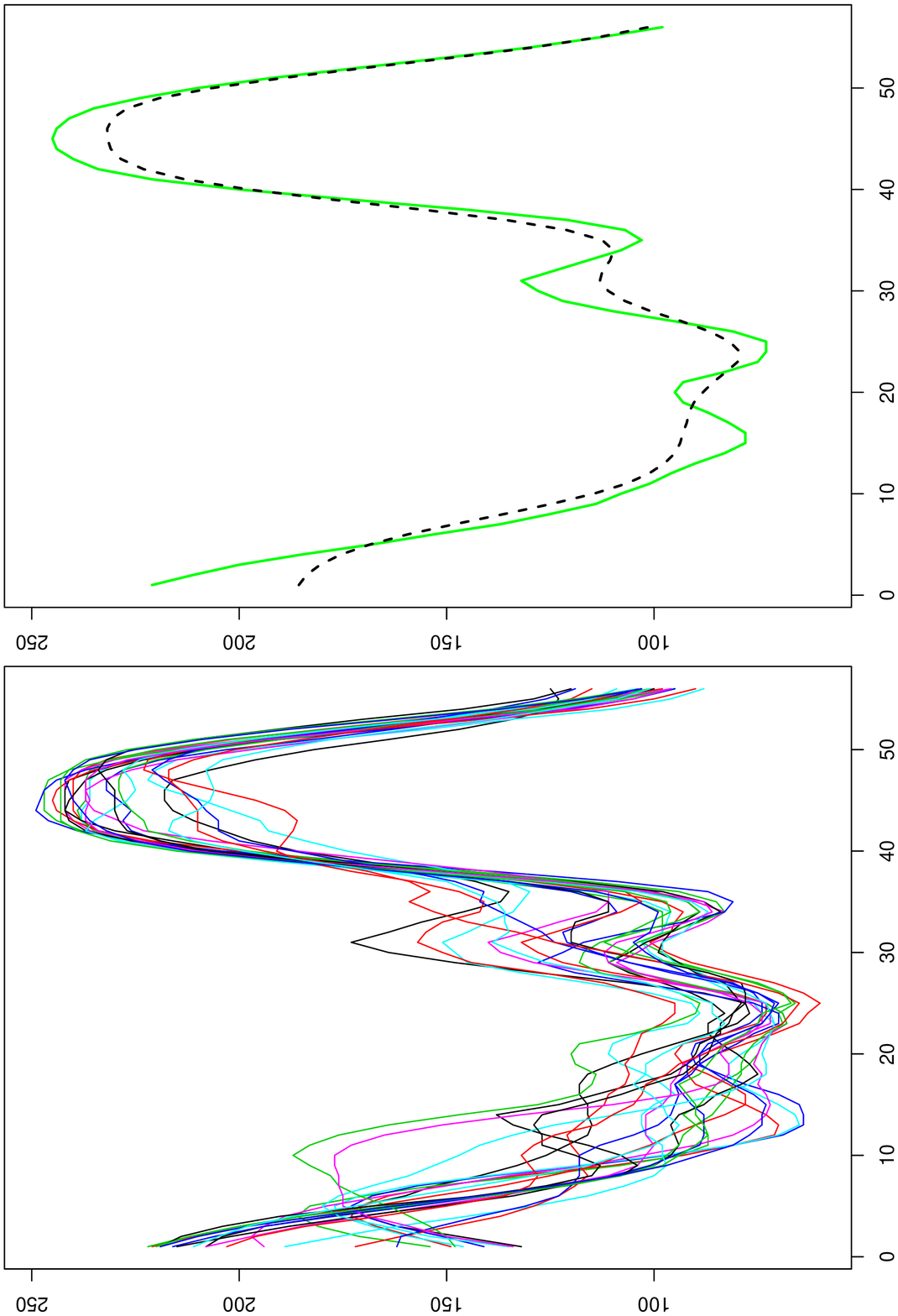}
	\caption{On the left, the second landscape data. On the right, the CAM representative estimation (black dashed line) and the Manifod Warping estimation (green solid line).}
	\label{real_data_mais}
\end{figure}

Our aim is to build a classification procedure. For this, we will use  a labeled set of curves and extract from each group of similar landscape  a representative profile. Then, we will allocate a new curve to the group whose representative curve will be the closest. That is the reason why it is  important to obtain a pattern which captures the structure of the curves. We will use three different ways to get a representative group of curves, the mean curve, the CAM method and our method, referred to as the Manifold Warping. We will compare their classification performance  together with a usual classification procedure : the classical $k$-nearest neighbours.

In Figure~\ref{real_data_ble}, we observe that the CAM average oversmoothes the peaks of activity at times 12 and 22 to make them almost nonexistent. This is a clear defect since, according to the experts of landscape remote sensing, these peaks of activity are representative of the nature of landscape. Indeed, these peaks convey essential informations which determines, among other things, the type of landscape. On the other hand, these changes are very well rendered by the pattern obtained  by Manifold Warping. The same conclusions can be drawn in Figure~\ref{real_data_mais} for an other landscape. In this application domain, extracting a curve by Manifold Warping is best able to report data as reflecting their structure and thus  to obtain a better representative.

Now,  we try to identify "unknown" landscapes by comparing each curve to the mean pattern of each group. The allocation rule is built using the Euclidean distance. Note that here we have sought to classify the landscapes, not using the whole curve which would correspond to a whole year of observation but using only a part of the curves, namely all the observations before $t=30$. To benchmark our procedure,  we compare our performance to  the method of the $k$-nearest neighbors classification.  Finally, we obtain the confusion matrices displayed in Tables~\ref{GJ_mat}  and~\ref{knn_mat}. We get a much better discrimination of landscapes with the method consisting in estimating a representative by Manifold Warping than by the CAM method or by classical mean.
\begin{table}[htb]
	\begin{tabular}{|c|c|c||c|c|}
	\hline
 	\textbf{Pixel} & \multicolumn{2}{|c||}{\textbf{Manifold classification}} & \multicolumn{2}{|c|}{\textbf{CAM classification}}\\
 	\hline
	\textbf{Reference} & Landscape1 & Landscape2 & Landscape1 & Landscape2\\
	\hline
	Landscape1 & 21 & 0 & 12 & 9\\
	\hline
	Landscape2 & 1 & 19 & 1 & 19\\
	\hline
	\end{tabular}
  \caption{Manifold Warping and CAM confusion matrices.}
  \label{GJ_mat}
\end{table}
\begin{table}[htb]
	\begin{tabular}{|c|c|c||c|c|}
	\hline
 	\textbf{Pixel} & \multicolumn{2}{|c||}{\textbf{Mean classification}} & \multicolumn{2}{|c|}{\textbf{$k$-nn classification}}\\
 	\hline
	\textbf{Reference} & Landscape1 & Landscape2 & Landscape1 & Landscape2\\
	\hline
	Landscape1 & 12 & 9 & 15 & 6\\
	\hline
	Landscape2 & 0 & 20 & 2 & 18\\
	\hline
	\end{tabular}
  \caption{Classical mean and $k$-nearest neighbors confusion matrices.}
  \label{knn_mat}
\end{table}
\section{Conclusion}\label{s:conclu}
By using an Isomap inspired strategy, we have extracted from a pattern of curves, a curve which, playing the role of the mean, serves as a pattern conveying the information of the data. In some cases, in particular when the structure of the deformations entails that the curve can be embedded into a manifold regular enough, we have shown that this corresponds to finding the structural expectation of the data, developed in \cite{Dupuy2011}, which improves the performance of other {\it mean} extraction methods. This enables to derive a classification strategy that assigns a curve to the group, whose representative curve is the closest, with respect to the chosen distance. Of course, the performance of this allocation rule deeply relies on the good performance of the pattern extraction.

One of the major drawbacks of this methodology are that first  a high number of data are required in order to guarantee a good approximation of the geodesic distance at the core of this work. Actually, note that the number of observations, i.e the sampling rate of the manifold highly depends on the regularity of the manifold such that the assumption that the euclidean path between two observations follow approximatively the geodesic path. Hence, the data set should be carefully chosen for the manifold to be smooth enough. We point out that an enhancement could come from a prior registration procedure  first applied to the curve and then the manifold warping procedure applied to the registered data.

The second drawback which may  also be viewed as an advantage, is the following : the extracted pattern is a curve that belong to the observations. One the one hand, it may contains noise if the data are noisy observations, but on the other hand it thus guarantees that the pattern shares the mean properties and specifies of the observations. A solution when the noise must be removed is either to directly smooth the resulting pattern or to consider the neigbourhood of the extracted pattern with respect to the approximated geodesic distance and then use a kernel estimator with these observations to obtain a regularized {\it mean} curve.

Nevertheless, we promote this procedure when a large amount of data are available and when the sets of similar curves share a common behaviour which fully characterizes the observations, coming from an economic, physical or biological model for instance. This methods has been applied with success to a large amount of cases. Numerical packages for R or Matlab are available on request.
\section{Appendix}\label{s:append}
\begin{proof}[Proof of Theorem \ref{thmSM}]
Take  $\mu=X(\alpha)$ with $\alpha\in]b,c[$, we can write
\begin{eqnarray*}
\widehat{\mu}_I^1&=&\arg\min_{X(\alpha)\in\mathcal{C}}\sum_{i=1}^n\delta\left(X\left(A_i\right),X(\alpha)\right)\\
&=&\arg\min_{X(\alpha)\in\mathcal{C}}\sum_{i=1}^nD\left(A_i,\alpha\right)=\arg\min_{X(\alpha)\in\mathcal{C}}C(\alpha)
\end{eqnarray*}
where $D$ is the following distance on $]b,c[$ :
$$D\left(A_i,\alpha\right)=\left|\int_{A_i}^{\alpha}\left\|X^\prime(a)\right\|\mathrm{d}a\right|.$$
In the following, let $\left(A_{(i)}\right)_i$ the ordered parameters. That is
$$A_{(1)}<A_{(2)}<\dots<A_{(n)}.$$
Then, for a given $\alpha\in]b,c[$ such that $A_{(j)}<\alpha<A_{(j+1)}$, we get that
\begin{eqnarray*}
C(\alpha)&=&jD\left(\alpha,A_{(j)}\right)+\sum_{i=1}^{j-1}iD\left(A_{(i)},A_{(i+1)}\right)\\
&+&(n-j)D\left(\alpha,A_{(j+1)}\right)+\sum_{i=j+1}^{n-1}(n-i)D\left(A_{(i)},A_{(i+1)}\right).
\end{eqnarray*}
For the sake of simplicity, let $n=2q+1$. It follows that $\widehat{\mathrm{med}}(A)=A_{(q+1)}$. Moreover, let $\alpha=A_{(j)}$ with $j<q+1$. By symmetry, the case $j>q+1$ will hold. Then, we rewrite $C\left(\alpha\right)$ as
\begin{equation*}
C\left(\alpha\right)=\sum_{i=1}^{j-1}iD\left(A_{(i)},A_{(i+1)}\right)+\sum_{i=j}^{n-1}(n-i)D\left(A_{(i)},A_{(i+1)}\right)
\end{equation*}
and, by introducing $A_{(q+1)}$, we get that
\begin{eqnarray*}
C(\alpha)&=&\sum_{i=1}^{j-1}iD\left(A_{(i)},A_{(i+1)}\right)+\sum_{i=j}^qiD\left(A_{(i)},A_{(i+1)}\right)\\
&+&\sum_{i=j}^q(n-2i)D\left(A_{(i)},A_{(i+1)}\right)+\sum_{i=q+1}^{n-1}(n-i)D\left(A_{(i)},A_{(i+1)}\right).
\end{eqnarray*}
Finally, we notice that
\begin{equation*}
C(\alpha)=C\left(A_{(q+1)}\right)+\sum_{i=j}^q(n-2i)D\left(A_{(i)},A_{(i+1)}\right)>C\left(A_{(q+1)}\right).
\end{equation*}
And the result follows since
$$\widehat{\mu}_I^1=\arg\min_{X(\alpha)\in\mathcal{C}}C(\alpha)=X\left(A_{(q+1)}\right)=X\left(\widehat{\mathrm{med}}(A)\right)=\widehat{f}_\mathrm{SM}.$$
\end{proof}

\bibliographystyle{plainnat}
\bibliography{manif}

\begin{thebibliography}{24}
\providecommand{\natexlab}[1]{#1}
\providecommand{\url}[1]{\texttt{#1}}
\expandafter\ifx\csname urlstyle\endcsname\relax
  \providecommand{\doi}[1]{doi: #1}\else
  \providecommand{\doi}{doi: \begingroup \urlstyle{rm}\Url}\fi

\bibitem[Balasubramanian and Schwartz(2002)]{Balasubramanian2002}
Mukund Balasubramanian and Eric~L. Schwartz.
\newblock The isomap algorithm and topological stability.
\newblock \emph{Science}, 295, 2002.

\bibitem[Bhattacharya and Patrangenaru(2003)]{Bhattacharya2003}
Rabi Bhattacharya and Vic Patrangenaru.
\newblock Large sample theory of intrinsic and extrinsic sample means on
  manifolds. {I}.
\newblock \emph{Ann. Statist.}, 31\penalty0 (1):\penalty0 1--29, 2003.
\newblock ISSN 0090-5364.

\bibitem[Bigot and Loubes(2010)]{Bigot2010b}
J{\'e}r{\'e}mie Bigot and Jean-Michel Loubes.
\newblock Semiparametric estimation of shifts on compact lie groups for image
  registration.
\newblock \emph{Probab. Theory Relat. Fields}, Published online, 2010.

\bibitem[Castillo and Loubes(2009)]{Castillo2009}
I.~Castillo and J.-M. Loubes.
\newblock Estimation of the distribution of random shifts deformation.
\newblock \emph{Math. Methods Statist.}, 18\penalty0 (1):\penalty0 21--42,
  2009.
\newblock ISSN 1066-5307.

\bibitem[Csardi and Nepusz(2006)]{Csardi2006}
Gabor Csardi and Tamas Nepusz.
\newblock The igraph software package for complex network research.
\newblock \emph{InterJournal}, Complex Systems:\penalty0 1695, 2006.

\bibitem[de~Silva and Tenenbaum(2003)]{Silva2003}
Vin de~Silva and Joshua~B. Tenenbaum.
\newblock Unsupervised learning of curved manifolds.
\newblock In \emph{Nonlinear estimation and classification ({B}erkeley, {CA},
  2001)}, volume 171 of \emph{Lecture Notes in Statist.}, pages 453--465.
  Springer, New York, 2003.

\bibitem[Dupuy et~al.(2011)Dupuy, Loubes, and Maza]{Dupuy2011}
Jean-François Dupuy, Jean-Michel Loubes, and Elie Maza.
\newblock Non parametric estimation of the structural expectation of a
  stochastic increasing function.
\newblock \emph{Stat Comput}, 21\penalty0 (1):\penalty0 121--136, 2011.
\newblock ISSN 1573-1375.

\bibitem[Gamboa et~al.(2007)Gamboa, Loubes, and Maza]{Gamboa2007}
Fabrice Gamboa, Jean-Michel Loubes, and Elie Maza.
\newblock Semi-parametric estimation of shifts.
\newblock \emph{Electron. J. Stat.}, 1:\penalty0 616--640, 2007.
\newblock ISSN 1935-7524.

\bibitem[Gervini and Gasser(2005)]{Gervini2005}
Daniel Gervini and Theo Gasser.
\newblock Nonparametric maximum likelihood estimation of the structural mean of
  a sample of curves.
\newblock \emph{Biometrika}, 92\penalty0 (4):\penalty0 801--820, 2005.
\newblock ISSN 0006-3444.

\bibitem[James(2007)]{James2007}
Gareth~M. James.
\newblock Curve alignment by moments.
\newblock \emph{Ann. Appl. Stat.}, 1\penalty0 (2):\penalty0 480--501, 2007.
\newblock ISSN 1932-6157.

\bibitem[Kneip and Gasser(1992)]{Kneip1992}
Alois Kneip and Theo Gasser.
\newblock Statistical tools to analyze data representing a sample of curves.
\newblock \emph{Ann. Statist.}, 20\penalty0 (3):\penalty0 1266--1305, 1992.
\newblock ISSN 0090-5364.

\bibitem[Liu and M\"{u}ller(2004)]{Liu2004}
X.~Liu and H.~M\"{u}ller.
\newblock Functional convex averaging and synchronization for time-warped
  random curves.
\newblock \emph{Journal of the American Statistical Association}, 99:\penalty0
  687--699, 2004.

\bibitem[Maza(2006)]{Maza2006}
Elie Maza.
\newblock Estimation de l'esp\'erance structurelle d'une fonction al\'eatoire.
\newblock \emph{C. R. Math. Acad. Sci. Paris}, 343\penalty0 (8):\penalty0
  551--554, 2006.
\newblock ISSN 1631-073X.

\bibitem[Oksanen et~al.(2011)Oksanen, Blanchet, Kindt, Legendre, O'Hara,
  Simpson, Solymos, Stevens, and Wagner]{Oksanen2011}
Jari Oksanen, F.~Guillaume Blanchet, Roeland Kindt, Pierre Legendre, R.~B.
  O'Hara, Gavin~L. Simpson, Peter Solymos, M.~Henry~H. Stevens, and Helene
  Wagner.
\newblock \emph{vegan: Community Ecology Package}, 2011.
\newblock R package version 1.17-8.

\bibitem[Pennec(2006)]{Pennec2006}
Xavier Pennec.
\newblock Intrinsic statistics on riemannian manifolds: basic tools for
  geometric measurements.
\newblock \emph{J. Math. Imaging Vision}, 25\penalty0 (1):\penalty0 127--154,
  2006.
\newblock ISSN 0924-9907.

\bibitem[{R Development Core Team}(2010)]{R2010}
{R Development Core Team}.
\newblock \emph{R: A Language and Environment for Statistical Computing}.
\newblock R Foundation for Statistical Computing, Vienna, Austria, 2010.
\newblock {ISBN} 3-900051-07-0.

\bibitem[Ramsay and Li(1998)]{Ramsay1998}
J.~O. Ramsay and Xiaochun Li.
\newblock Curve registration.
\newblock \emph{J. R. Stat. Soc. Ser. B Stat. Methodol.}, 60\penalty0
  (2):\penalty0 351--363, 1998.
\newblock ISSN 1369-7412.

\bibitem[Ramsay and Silverman(2005)]{Ramsay2005}
J.~O. Ramsay and B.~W. Silverman.
\newblock \emph{Functional data analysis}.
\newblock Springer Series in Statistics. Springer, New York, second edition,
  2005.
\newblock ISBN 978-0387-40080-8; 0-387-40080-X.

\bibitem[R{\o}nn(2001)]{Ronn2001}
Birgitte~B. R{\o}nn.
\newblock Nonparametric maximum likelihood estimation for shifted curves.
\newblock \emph{J. R. Stat. Soc. Ser. B Stat. Methodol.}, 63\penalty0
  (2):\penalty0 243--259, 2001.
\newblock ISSN 1369-7412.

\bibitem[Sakoe and Chiba(1978)]{Sakoe1978}
Hiroaki Sakoe and Seibi Chiba.
\newblock Dynamic programming algorithm optimization for spoken word
  recognition.
\newblock \emph{IEEE Trans. Acoust. Speech Signal Process.}, 26:\penalty0
  43--49, 1978.

\bibitem[Silverman(1995)]{Silverman1995}
B.~W. Silverman.
\newblock Incorporating parametric effects into functional principal components
  analysis.
\newblock \emph{J. Roy. Statist. Soc. Ser. B}, 57\penalty0 (4):\penalty0
  673--689, 1995.
\newblock ISSN 0035-9246.

\bibitem[Tenenbaum et~al.(2000)Tenenbaum, de~Silva, and
  Langford]{Tenenbaum2000}
Joshua~B. Tenenbaum, Vin de~Silva, and John~C. Langford.
\newblock A global geometric framework for nonlinear dimensionality reduction.
\newblock \emph{Science}, 290:\penalty0 2319--2323, 2000.

\bibitem[Vimond(2010)]{Vimond2010}
Myriam Vimond.
\newblock Efficient estimation for a subclass of shape invariant models.
\newblock \emph{Ann. Statist.}, 38\penalty0 (3):\penalty0 1885--1912, 2010.
\newblock ISSN 0090-5364.

\bibitem[Wang and Gasser(1997)]{Wang1997}
K.~Wang and T.~Gasser.
\newblock Alignment of curves by dynamic time warping.
\newblock \emph{The Annals of Statistics}, 25\penalty0 (3):\penalty0
  1251--1276, 1997.

\end{thebibliography}

\end{document}